\definecolor{lightgray}{gray}{0.9}  
\newtheorem{theorem}{Theorem}[section]}
\newtheorem{lemma}[theorem]{Lemma}}
\newtheorem{definition}[theorem]{Definition}}
\newtheorem{assumption}[theorem]{Assumption}}
\newtheorem{example}[theorem]{Example}}
\newcommand{\setR}{\mathbb{R}}
\newcommand{\ones}{\mathds{1}}
\newcommand{\zeros}{\vec{0}}
\newcommand{\setX}{\mathbb{X}}
\newcommand{\nonnegR}{\mathbb{R}_{\geq0}}
\newcommand{\posR}{\mathbb{R}_{>0}}
\newcommand{\setRn}{\mathbb{R}^{n}}
\newcommand{\setRnx}{\mathbb{R}^{n_x}}
\newcommand{\setRny}{\mathbb{R}^{n_y}}
\newcommand{\graphA}{\mathcal{A}}
\newcommand{\graphAbar}{\bar{\mathcal{A}}}
\newcommand{\graphC}{\mathcal{C}}
\newcommand{\graphD}{\mathcal{D}}
\newcommand{\graphE}{\mathcal{E}}
\newcommand{\graphG}{\mathcal{G}}
\newcommand{\graphI}{\mathcal{I}}
\newcommand{\graphK}{\mathcal{K}}
\newcommand{\graphM}{\mathcal{M}}
\newcommand{\graphMbar}{\bar{\mathcal{M}}}
\newcommand{\graphP}{\mathcal{P}}
\newcommand{\graphS}{\mathcal{S}}
\newcommand{\graphV}{\mathcal{V}}
\newcommand{\graphX}{\mathcal{X}}
\newcommand{\qo}{q_{\mathrm{e}}}
\newcommand{\ho}{h_{\mathrm{e}}}
\newcommand{\Aadj}{A_{\mathrm{adj}}}
\newcommand{\Ainc}{A_{\mathrm{inc}}}
\newcommand{\graphAwdn}{\mathcal{A}_{\mathrm{WDN}}}
\newcommand{\graphAwdnbar}{\bar{\mathcal{A}}_{\mathrm{WDN}}}
\newcommand{\OpPoint}{(q_{\mathrm{e}}^\top,h_{\mathrm{e}}^\top)^\top}
\newcommand{\strucset}{\{0, \ast, ?\}^{p \times q}}
\newcommand{\MatrinStruc}[1]{#1\!\in\! \mathcal{P}(\mathcal{#1})}
\newcommand{\matM}{\begin{bmatrix}
        \graphA^\top &\graphC^\top
    \end{bmatrix}}
\newcommand{\matMbar}{\begin{bmatrix}
        \bar{\mathcal{A}}^\top &\graphC^\top
    \end{bmatrix}}
\newcommand{\xe}{x_{\mathrm{e}}}
\title{\LARGE \bf
Exploiting structural observability and graph colorability for optimal sensor placement in water distribution networks}
\author{J.J.H. van Gemert, V. Breschi, D.R. Yntema, K.J. Keesman, M. Lazar
\thanks{This work was performed in the corporation framework of Wetsus, European Centre of Excellence for Sustainable Water Technology (www.wetsus.nl). Wetsus is co-funded by the Dutch Government (Ministry of Economic Affairs and Climate Policy, Ministry of Education, Culture and Science and Ministry of Infrastructure and Water Management) and the Province of Fryslan. The authors like to thank the participants of the research theme `Smart Water Grids' for the fruitful discussions and their financial support.}
\thanks{J.J.H. van Gemert, V. Breschi and M. Lazar are with the Department of Electrical Engineering, Eindhoven University of Technology, The Netherlands. E-mail of corresponding author: {\tt\small J.J.H.v.Gemert@tue.nl}}
\thanks{D. Yntema and K.J. Keesman are with Wetsus, Centre of Excellence for Sustainable Water Technology, 8900 Leeuwarden, The Netherlands. }
\thanks{
K.J. Keesman is also with Mathematical and Statistical Methods – Biometris, Wageningen University, Wageningen, The Netherlands.} 
}
\begin{document}
\maketitle
\thispagestyle{empty}
\pagestyle{empty}
\begin{abstract}
    Water distribution networks (WDNs) are critical systems for our society and detecting leakages is important for minimizing losses and water waste. This makes optimal sensor placement for leakage detection very relevant. Existing sensor placement methods rely on simulation-based scenarios, often lacking structure and generalizability, or depend on the knowledge of specific parameters of the WDN as well as on initial sensor data for linearization and demand estimation. 
    Motivated by this, this paper investigates the observability of an entire WDN, based on structural observability theory. This allows us to establish the conditions for the observability of the WDN model, independently of parameter uncertainties. Additionally, a sensor placement algorithm is proposed that leverages such observability conditions and graph theory and accounts for the industrial and material costs. To demonstrate the effectiveness of our approach, we apply it to a hydraulic-transient WDN model.
\end{abstract}

\section{Introduction}\label{Sec: Introduction}
Water distribution networks (WDNs) are essential pressurized systems that provide drinking water to various regions. Undetected leakages and bursts lead to significant water losses and contamination, which can be costly and a waste of precious resources, especially considering the growing issue of water scarcity due to climate change. Even in highly developed countries like the Netherlands, up to $5\%$ of drinking water is lost every year \cite{waternews2023}. In this context, placing sensors to monitor the network and effectively identifying and localizing leaks is essential \cite{annaswamy2024control}. However, placing sensors in a WDN is both costly and time-consuming, making optimal sensor placement critical to maximizing detection capabilities while taking into account resource constraints.

Traditional sensor placement methods in WDNs rely on simulation-based approaches where multiple leakage scenarios are simulated with several sensor combinations to find the best solution \cite{sarrate2014sensor,casillas2013optimal,steffelbauer2014sensor}. Although these methods have promising results overall, they are unstructured and have poor generalizability since leakage detection is guaranteed only across the (finite) scenarios explored in simulation.
Therefore, incorporating control theory, particularly observability theory, could be advantageous for optimal sensor placement in WDNs to improve leakage detection and monitoring. 
Achieving observability allows for the reconstruction of pressure and flow rates throughout the entire WDN. By continuously comparing the reconstructed pressure and flow rates with expected values, any discrepancies exceeding a predefined threshold can be flagged as a potential leakage. With accurate reconstruction, these leakages can be detected more accurately. Moreover, using observability theory can result in a more structured sensor placement approach that exploits all possible sensor placement combinations.

In the literature, several methods using observability theory for sensor placement have been proposed. In \cite{geelen2021optimal,bopardikar2021randomized}, two approaches utilizing the observability Gramian to guarantee observability and maximize the minimum eigenvalue of the Gramian to enhance detectability are presented. The method introduced in \cite{montanari2020observability} extends this by examining the ratio between the minimum and maximum eigenvalues of the observability Gramian. However, these methods rely on the assumption that the model parameters are known. 
For WDNs, these parameters are often uncertain or unknown, which can result in unobservable outcomes in scenarios where these values vary.
A method that accounts for some parameter variations is proposed in \cite{taha2021revisiting}. They present a sensor placement algorithm that minimizes the Kalman filter state estimation error. Furthermore, the method in \cite{zhang2024functional} accounts for an even more extensive set of uncertainties by developing a new algorithm based on structural functional observability. Their approach provides necessary and sufficient conditions for observability and offers greedy heuristics for selecting a sensor placement configuration from a predetermined set. 
Although these methods are promising, they either focus on linear models, need previous data for linearization, account for a limited set of uncertainties, or rely on algorithms that do not necessarily yield minimal or optimal sensor configurations. However, WDN models are generally nonlinear \cite{chaudhry2014applied} and subject to parametric uncertainties. Moreover, since sensor placement in WDNs is costly, minimizing the number of sensors is desirable. 
Thus, it would be beneficial to develop a method that considers uncertainties in model parameters and guarantees the observability of the WDN, resulting in an optimal sensor placement configuration.

Motivated by this, this paper aims to exploit structural observability principles based on structural controllability theory \cite{mayeda1979strong,mousavi2017structural,jia2020unifying}, which can also be applied to establish local observability of a WDN model. By utilizing structural observability, we do not rely on the assumption that the system parameters are fully known.
Additionally, we present an optimal sensor placement algorithm that exploits the colorability rule from \cite{jia2020unifying} and graph centrality measures, guaranteeing observability for a nonlinear WDN model with uncertain parameters while considering resource constraints. The sensor placement algorithm incorporates a cost-based search algorithm that accounts for the importance of the possible sensor location, the number of connected states, and the industrial cost of sensor placement associated with sensor installation and material costs. We demonstrate the effectiveness of this algorithm with an example on a generally accepted hydraulic-transient WDN model given in \cite{zeng2022elastic}.

The remainder of the paper is organized as follows. In Section \ref{Sec: Preliminaries}, the considered system dynamics, along with observability theory, structural system theory, and graph theory, are introduced. In Section \ref{Sec: Setting and problem formulation}, the setting and the problem formulation are given. 
In Section \ref{Sec: main results}, the main results are described, including the observability conditions and the sensor placement algorithm. 
Two examples using the introduced optimal sensor placement algorithm are shown in Section \ref{Sec: Examples}, followed by the conclusions in Section \ref{Sec: Conclusions}.

\paragraph*{Notation}
Let $\setR$, $\nonnegR$, and $\posR$ denote the field of real numbers, the set of non-negative reals, and the set of positive reals, respectively. A variable $a\in\{0,1\}$ is called a binary variable. 
For a vector $x\in \setR^{n_x}$, $x_i$ denotes the $i$-th element of $x$, $\sqrt{x}=\begin{bmatrix} \sqrt{x_1} & \sqrt{x_2} & \hdots & \sqrt{x_n} \end{bmatrix}^\top$ denotes the element-wise square root of $x$ and $|x|=\begin{bmatrix} |x_1| & |x_2| & \hdots & |x_n| \end{bmatrix}^\top$ the element-wise absolute value of $x$. 
A vector $\zeros \in\setRn$ denotes a vector that contains all zero elements, whereas a vector that contains only ones is denoted by $\ones\in\setRn$. 
For a matrix $A\in \setR^{n_x\times n_x}$, $A^\top$ denotes its transpose and $A^{-1}$ denotes its inverse. A diagonal matrix $A\in\setR^{n_x\times n_x}$ is denoted as $\text{diag}\{A_i\}_{i=1}^{n_x}$. The operator $\bigcup_{j=1}^{n} A_j$ denotes the union of sets $A_1$ through $A_n$, i.e., $\bigcup_{j=1}^{n} A_j = A_1 \cup A_2 \cup \hdots \cup A_n$.

\section{Preliminaries}\label{Sec: Preliminaries}
We now introduce the preliminaries on observability theory, structured systems, and graph theory. Note that the class of systems considered in this section is already a generalization of the WDN model described in Section \ref{Sec: Setting and problem formulation}.
\subsection{Preliminaries on observability theory}

Consider the continuous-time, autonomous, time-invariant, nonlinear system 
\begin{equation}\label{eq: nonsys}
\begin{aligned}
    \Dot{x}(t)=&f(x(t)), \quad t\in \nonnegR,\\
    y(t) =& Cx(t),
    \end{aligned}
\end{equation}
where $x(t)\in \setX\subseteq \setRnx$ and $y(t)\in\setRny$ are the state and the output of the system, respectively,  $f:\setRnx\rightarrow\setRnx$, and $C\in\setR^{n_y\times n_x}$ is the output measurement matrix. In this paper, we will mainly focus on the linearization of \eqref{eq: nonsys} around an arbitrary equilibrium point $\xe\in\setX$ with $f(\xe)=0$, given by
\begin{equation}\label{eq: linsys}
\begin{aligned}
    \delta\Dot{x}(t)=&A(\xe)\delta x(t), \quad t\in \nonnegR,\\
    \delta y(t) =& C\delta x(t),
    \end{aligned}
\end{equation}
where $\delta x(t)=x(t)-\xe$ and  $A(\xe)\in\setR^{n_x\times n_x}$ is the state matrix.
When studying the observability of a system, we aim to determine whether the system's internal state $x(t)$ can be estimated from its available output measurements over time according to the definition below. 
\begin{definition}[\hspace{-0.005cm}\cite{kalman1960general}]\label{Def: observability}
    A system \eqref{eq: nonsys} or \eqref{eq: linsys} is said to be observable if, for any unknown initial state $x(0)\in\setX$, there exists a finite time such that the output $y(\cdot):[0,t]\rightarrow\setRny$ suffices to uniquely determine $x(0)$.
\end{definition}
Given any equilibrium point $\xe\in\setX$, local observability of the nonlinear system \eqref{eq: nonsys} can be studied using the following definition (see \cite[Section 6.1]{lee1967foundations}). 
\begin{definition}\label{Def:Lin-NonlinOBSV}
   If the linearized system \eqref{eq: linsys}, derived from \eqref{eq: nonsys} around an arbitrary equilibrium point $\xe\in\setX$, is observable, then we call the nonlinear system locally observable.
\end{definition}
Next, we introduce structural observability, which analyzes the observability of \eqref{eq: nonsys} through \eqref{eq: linsys} while accounting for uncertainties in the parameters and operating points.

\subsection{Structured linear systems}
Structural system theory uses so-called pattern matrices to evaluate the observability of uncertain linear systems solely based on their structure. To that end, let us first introduce the concept of a pattern matrix $\graphX$.
\begin{definition}\label{Def: PatternMatrix}
Consider a family of matrices $X\in\setR^{p\times q}$ sharing the same structure, defined by the pattern matrix $\graphX\in\strucset$. Then we say that $X$ belongs to the class of pattern matrices $\graphP(\graphX)$ defined as
\begin{align*}
    \graphP(\graphX):=\{X\in\setR^{p\times q}\;|\; &X(i,j)=0\text{ if } \graphX(i,j)=0,\\  &X(i,j)\neq 0 \text{ if } \graphX(i,j)=\ast ,\\ &X(i,j)\in\setR \text{ if } \graphX(i,j)=?\;\}.
\end{align*}
\end{definition}
According to Definition \ref{Def: PatternMatrix}, for system \eqref{eq: linsys}, we define the pattern matrices $\graphA\!\in\!\{0,\ast,?\}^{n_x\times n_x}$ and $\graphC\!\in\!\{0,\ast,?\}^{n_y\times n_x}$, such that $\MatrinStruc{A}$ and $\MatrinStruc{C}$, representing the family of systems \eqref{eq: linsys}. We refer to this family of systems as a \emph{structured system}, denoted by the pair of pattern matrices $(\graphA,\graphC)$. 

The notation of a structured system allows for an extension of the concept of observability introduced in Section \ref{Sec: Preliminaries} to that of structural observability, which is formalized as follows.
\begin{definition}\label{Def: strucobs}
    A structured system $(\graphA,\graphC)$ is called strongly structurally observable if the pair $(A,C)$ is observable for all $\MatrinStruc{A}$ and $\MatrinStruc{C}$.
\end{definition}
In \cite{jia2020unifying}, necessary and sufficient conditions for establishing the strong structural controllability of a structured system are introduced. By exploiting the duality between controllability and observability \cite{kalman1960general}, these conditions can also be used to verify the strong structural observability of the pair $(\graphA,\graphC)$ by applying the theory to $(\graphA^\top, \graphC^\top)$. 
To assess strong structural observability and evaluate the importance of nodes within a network, we use graph-based methods. To facilitate this, we first introduce the necessary preliminaries on graph theory. 
\begin{figure}
    \centering
    \includegraphics[width=0.4\textwidth]{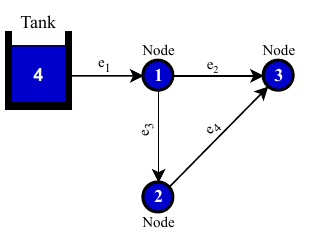}
    \caption{A schematic view of a triangular WDN.}
    \label{fig:WDN network}
\end{figure}

\subsection{Preliminaries on graph theory}
A graph is defined as $\graphG=\{\graphV,\graphE\}$, where $\graphV=\{v_i\}_{i=1}^{n}$ is the set of nodes and $\graphE=\{e_i\}_{i=1}^{m}$ is the set of edges (or links). 
A graph is undirected if each edge connects nodes bidirectionally, and a graph is directed if each edge has a direction, indicating a one-way relation from one node to another.
For both directed and undirected graphs, the adjacency matrix $A_{\mathrm{\mathrm{adj}}}$, indicates whether pairs of nodes are connected by an edge. For an undirected graph, the adjacency matrix is always symmetric, namely $A_{\mathrm{adj}}(i,j) = A_{\mathrm{adj}}(j,i)=1$ if there is an edge between $v_j$ and $v_i$, and $0$ otherwise. Meanwhile, for a directed graph, $A_{\mathrm{adj}}(i,j) = 1$ if there is an edge from $v_j$ to $v_i$, but $A_{\mathrm{adj}}(i,j)=0$ if there is no edge from $v_j$ to $v_i$.
For directed graphs, the incidence matrix $A_{\mathrm{inc}}$ indicates the relationship between nodes and edges, i.e., $\Ainc(i,j) = 1$ if $v_i$ is the tail of edge $e_j$ and $\Ainc(i,j) = -1$ if $v_i$ is the head of edge $e_j$. As an example, consider the WDN in Fig. \ref{fig:WDN network}, the corresponding adjacency and incidence matrices are
\begin{equation}\label{eq:IncAdjWdn}
   \Aadj\!=\!\begin{bmatrix}
         0& 0& 0& 1 \\
         1& 0& 0& 0\\
         1& 1& 0& 0\\
         0& 0& 0& 0\\
    \end{bmatrix}\!\!, \; \Ainc\!=\!\begin{bmatrix}
        -1\!&\! 1\!&\! 1\!&\! 0 \\
         0\!&\! 0\!&\!-1\!&\! 1\\
         0\!&\!-1\!&\! 0\!&\!-1\\
         1\!&\! 0\!&\! 0\!&\! 0\\
    \end{bmatrix}\!\!.
\end{equation}
To assess the importance of each node, centrality measures are used in graph theory to evaluate a node's influence within a network. While various centrality algorithms exist, we will focus on PageRank centrality for sensor placement. Pagerank centrality not only considers the number of connections of a node but also the significance of the nodes it connects to. The PageRank vector containing the ranking of each node denoted by ${\mathrm{PR}}$, is given as the unique positive solution to
 \begin{equation}\label{eq: PageRank}
    {\mathrm{PR}}(i)=\alpha\sum^n_{j=1}\frac{\Aadj(i,j)}{d_j^{\;\mathrm{in}}} {\mathrm{PR}}(j)+\frac{1-\alpha}{n}\!,
 \end{equation}
where $\alpha\in [0,1]$ is the convex combination coefficient, $n$ the number of nodes, $\Ainc$ the incidence matrix and $d_j^{\mathrm{\;in}}$ the in-degree of node $j$, \cite{brin1998anatomy}.
The above-defined graph theory will be used throughout the paper to assess strong structural observability as well as the importance of possible sensor locations.

\section{Setting and problem formulation}\label{Sec: Setting and problem formulation}

Let us consider a nonlinear elastic water column (EWC) model for hydraulic transient analysis of WDNs given by dynamics \eqref{eq: nonsys} with \cite{zeng2022elastic}:
\begin{equation}\label{eq: WDNmodel}
    \begin{aligned}
    x(t)&\!=\!\begin{bmatrix}
    q(t)\\h(t)
\end{bmatrix}\!,\\
f(x(t)) &\!=\!
\begin{bmatrix}
- {L}^{-1}  {R} \operatorname{diag}\{| q(t)|\}  q(t)+ {L}^{-1}  {A_{\mathrm{inc}}}^{\!\!\!\!\top}  h(t) \\
F^{-1}\left( {A_{\mathrm{inc}}}  q(t)- {Q}- {D} \sqrt{ h(t)}\right)
\end{bmatrix}\!\!,
\end{aligned}
\end{equation}
where $F=\text{diag}\{\frac{1}{2}|A_{\mathrm{inc}}|C_l+C_n\}$, $Q\in\setR^{n}$  is the outflow and demand, while the remaining parameters are defined in Table \ref{tab: WDNvariables}.
Note that $R$, $L$, and $D$ are diagonal matrices, with $R$ and $L$ both being positive definite, and all elements of $C_l$ and $C_n$ are positive. 
\begin{table}
    \caption{Parameters of the WDN model \eqref{eq: WDNmodel}.}
    \centering
    \begin{tabular}{|c|c|c|}
    \hline
    Parameter&Symbol&Dimension\\
    \hline
    Volumetric flow rate & $q(t)$ & $\setR^{m}$\\
    Piezometric hear & $h(t)$ & $\setR^{n}$\\
    Hydraulic inductance  & $L$ & $\setR^{m\times m}$ \\
    Hydraulic resistance & $R$ & $\setR^{m\times m}$ \\
    Valve and pressure discharge & $D$ & $\setR^{n\times n}$ \\
    Hydraulic link capacitance & $C_l$ & $\setR^{m}$ \\
    Hydraulic node capacitance & $C_n$ & $\setR^{n}$ \\
    Incidence matrix & $\Ainc$ & $\setR^{n\times m}$\\
    \hline 
    \end{tabular}
    \label{tab: WDNvariables}
\end{table}
\begin{assumption}\label{Ass: nonzero flow/head}
The flow rate in each pipe is always positive as well as the piezometric head in each junction, i.e., $q(t)\in\setR^m_{>0}$ and $h(t)\in\setR^n_{>0}$, for all $t\in\nonnegR$. This implies that, $x(t)\in \setX=\setR^m_{>0}\times\setR^n_{>0}$.
\end{assumption}
Note that WDNs are pressurized distribution systems. Therefore, Assumption \ref{Ass: nonzero flow/head} accurately portrays a real-world scenario \cite{chaudhry2014applied}. 

\subsection{Problem formulation}\label{Subsec: Problem formulation}
Let us assume that $L$, $R$, $D$, $C_l$, $C_n$ and $Q$ in \eqref{eq: nonsys} with \eqref{eq: WDNmodel} are \emph{unknown}, and let us assume we can shape the output measurement matrix, defining it as follows: $C\in\setR^{n_y\times n_x}$ where $C(i,j)\in\{1,0\}$ for all $i=1,\ldots,n_y$ and $j=1,\ldots,n_x$, such that $\sum_{j=1}^{n_x}C(i,j)=1$ for all $i=1,\ldots,n_y$ and $\sum_{i=1}^{n_y}C(i,j)\leq1$ for all $j=1,\ldots,n_x$. Note that these assumptions on the output matrix imply that each row contains exactly one element equal to $1$, and the rows are linearly independent. This specific format is chosen to represent a realistic WDN scenario, as pressure and flow sensors in WDNs cannot be connected to multiple states, e.g., a flow sensor is restricted to measure the flow rate in one specific pipe only.

Under these assumptions, our first objective is to study the observability of \eqref{eq: nonsys} with \eqref{eq: WDNmodel} and to establish the conditions that ensure the observability of the WDN model. 

By relying on the solution to the former problem, our second objective is to develop a sensor placement approach that minimizes $n_y$, i.e., the number of sensors, guaranteeing the observability of the WDN model while considering industrial costs and relying on the structural properties of the WDN model only.

\section{Main results}\label{Sec: main results}
In this section, we address the two main objectives. The first subsection focuses on studying the observability of the nonlinear WDN model, while the second subsection introduces an algorithm aimed at tackling our second objective.

\subsection{Structural observability of the WDN model}\label{Subsec: Observability of WDN}
To investigate the observability of the WDN model in \eqref{eq: nonsys} and \eqref{eq: WDNmodel} without estimating its unknown parameters, we employ a structural approach for linear systems inspired by \cite{jia2020unifying}. 
To this end, we first linearize the WDN model around an equilibrium point $\xe=\OpPoint\in\setX$, leading to \eqref{eq: linsys} with
 \begin{align}\label{eq: linear WDN model}
     A(\xe) \!=\!\! \begin{bmatrix}\!
         -L^{-1}R\text{ diag}\begin{Bmatrix}2|{\qo}|\end{Bmatrix}\!\!&\!\! L^{-1}\Ainc^\top\!\\
         \!F^{-1}\Ainc\!\!&\!\!-F^{-1}D\text{diag}\begin{Bmatrix}\!\frac{1}{2\sqrt{{\ho}}}\!\!\end{Bmatrix}
    \!\end{bmatrix}\!\!. 
\end{align}
Next, we derive a structured representation for the linearized WDN model \eqref{eq: linsys} with \eqref{eq: linear WDN model}. We begin by defining a pattern matrix $\graphC\in\{0,\ast\}^{n_y\times n_x}$ with
\begin{equation}\label{eq: strucC}
    \graphC(i,j)=\begin{cases}
        0 &\text{if } C(i,j)=0 \\ \ast &\text{otherwise},  \end{cases}
\end{equation}
such that $C\in\graphP(\graphC)$. 
We now focus on $A$ in \eqref{eq: linear WDN model}. Since $L$ and $R$ are nonzero diagonal matrices, under Assumption \ref{Ass: nonzero flow/head} its first block is a diagonal matrix with nonzero entries. We can thus define $\graphI=\text{diag}\{\ast\}\in\{0,\ast\}^{m\times m}$ such that 
\begin{equation*}
-L^{-1}R\text{diag}\begin{Bmatrix}2|{\qo}|\end{Bmatrix}\in\graphP(\graphI).
\end{equation*}
For the fourth block, a similar reasoning holds, leading to
\begin{equation*}
-F^{-1}D\text{diag}\begin{Bmatrix}\frac{1}{2\sqrt{{\ho}}}\end{Bmatrix}\in\graphP(\graphD).
\end{equation*}
Nonetheless, due to the arbitrary nature of the discharge matrix $D$, $\graphD=\text{diag}\{?\}\in\{0,?\}^{n\times n}$.

Next, we look at the second and third blocks. The multiplication of the diagonal nonzero matrices $L$ and $F$ with the (transpose) incidence matrix leads to a pattern matrix $\graphA_{\mathrm{inc}}\in\{0,\ast\}^{n\times m}$ with
\begin{equation*}
    \graphA_{\mathrm{inc}}(i,j)=\begin{cases}0 &\text{if } A(i,j)=0 \\\ast &\text{otherwise},\\ \end{cases}
\end{equation*}
such that
\begin{align*}
    L^{-1}\Ainc^\top\in\graphP(\graphA_{\mathrm{inc}}^\top),\quad
        F^{-1}\Ainc\in\graphP(\graphA_{\mathrm{inc}}),
\end{align*}
which represents the coupling between the two sets of states in \eqref{eq: linear WDN model}. We can now define the pattern matrix
\begin{equation}\label{eq: structuredAwdn}
    \graphAwdn=\begin{bmatrix}
        \graphI & \graphA_{\mathrm{inc}}^\top\\
        \graphA_{\mathrm{inc}} & \graphD\\
   \end{bmatrix}\!\!,
\end{equation} 
where $\graphAwdn\in\{0,\ast,?\}^{n_x\times n_x}$ such that $A(\xe)\in\graphP(\graphAwdn)$. We can now make some conclusions on the observability of the nonlinear WDN model based on the above-defined structured system $(\graphAwdn,\graphC)$. 
\begin{lemma}\label{Cor: global observability}
Suppose Assumption \ref{Ass: nonzero flow/head} holds and the structured system $(\graphAwdn,\graphC)$ with \eqref{eq: strucC} and \eqref{eq: structuredAwdn}, is strongly structurally observable. Then, the nonlinear WDN model \eqref{eq: nonsys} with \eqref{eq: WDNmodel} is locally observable.
\end{lemma}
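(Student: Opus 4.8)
The plan is to prove observability of the nonlinear model indirectly, through its linearizations, and then to let the strong structural observability hypothesis do the heavy lifting. By Definition \ref{Def:nonlin-linOBS}, the nonlinear system \eqref{eq: nonsys} with \eqref{eq: WDNmodel} is observable as soon as its linearization \eqref{eq: linsys}, with state matrix $A(\xo)$ in \eqref{eq: linear WDN model}, is observable for \emph{every} operation point $\xo\in\setX$. The key observation is that strong structural observability of $(\graphAwdn,\graphC)$ certifies observability of the pair $(A,C)$ for \emph{all} realizations in the pattern class at once, so it suffices to show that each linearization $A(\xo)$ is such a realization.

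First I would fix an arbitrary $\xo=\OpPoint\in\setX$ and verify the two membership relations $A(\xo)\in\graphP(\graphAwdn)$ and $C\in\graphP(\graphC)$. The second is immediate from the construction \eqref{eq: strucC}. For the first, I would reuse the block-by-block analysis preceding the lemma: Assumption \ref{Ass: nonzero flow/head} guarantees $\qo\in\setR^m_{>0}$ and $\ho\in\setR^n_{>0}$, so that $\sqrt{\ho}$ is well defined and $|\qo|$ is entrywise nonzero, which forces the $(1,1)$ block $-L^{-1}R\,\mathrm{diag}\{2\qo^2/|\qo|\}$ to be diagonal with strictly nonzero entries and hence to lie in $\graphP(\graphI)$; the off-diagonal blocks inherit the zero pattern of $\Ainc$ after multiplication by the nonzero diagonal matrices $L^{-1}$ and $F^{-1}$, placing them in $\graphP(\graphA_{\mathrm{inc}}^\top)$ and $\graphP(\graphA_{\mathrm{inc}})$; and the $(2,2)$ block is diagonal with entries of a priori unknown value, captured by $\graphD=\mathrm{diag}\{?\}$. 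Assembling these blocks according to \eqref{eq: structuredAwdn} yields $A(\xo)\in\graphP(\graphAwdn)$.

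With both memberships in hand, I would invoke Definition \ref{Def: strucobs}: strong structural observability of $(\graphAwdn,\graphC)$ means the pair $(A,C)$ is observable for all $A\in\graphP(\graphAwdn)$ and all $C\in\graphP(\graphC)$, so in particular $(A(\xo),C)$ is observable. Since $\xo\in\setX$ was arbitrary, the linearized system is observable at every operation point, and Definition \ref{Def:nonlin-linOBS} then delivers observability of the nonlinear WDN model in the sense of Definition \ref{Def: observability}, completing the argument.

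The main subtlety, rather than any lengthy computation, is the containment $\{A(\xo):\xo\in\setX\}\subseteq\graphP(\graphAwdn)$: strong structural observability quantifies over the full pattern class, which is in general strictly larger than the realizable family of linearizations, so the proof works precisely because \emph{membership} (not equality) of $A(\xo)$ in the class is all that is needed. The one place where genuine care is required is the use of Assumption \ref{Ass: nonzero flow/head}: it is what certifies that the $(1,1)$ block entries are nonzero---matching the symbol $\ast$ in $\graphI$ rather than the wildcard $?$---and what guarantees that the linearization \eqref{eq: linear WDN model} is well defined at all, since both $\sqrt{\ho}$ and the derivative of $|\qo|$ require strictly positive coordinates.
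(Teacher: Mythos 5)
Your proof is correct and follows essentially the same route as the paper's: invoke Definition \ref{Def:nonlin-linOBS} to reduce to linearizations, use Assumption \ref{Ass: nonzero flow/head} to establish $A(\xo)\in\graphP(\graphAwdn)$ for every $\xo\in\setX$, and then apply Definition \ref{Def: strucobs} to conclude observability of each pair $(A(\xo),C)$. Your additional block-by-block verification of the membership $A(\xo)\in\graphP(\graphAwdn)$ makes explicit what the paper delegates to the discussion preceding the lemma, but the argument is the same.
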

\begin{proof}
This result follows directly from three key points. First, by Definition \ref{Def:Lin-NonlinOBSV}, a nonlinear system is locally observable if its linearization is observable for some $\xe\in\setX$.
Second, Assumption \ref{Ass: nonzero flow/head}, specifies that the states of the WDN model operate within a specific set, namely $q(t)\in\setR^m_{>0}$ and $h(t)\in\setR^n_{>0}$, for all $t\in\nonnegR$. This implies that any equilibrium point $\xe=\OpPoint$ lies within the set $\setR^m_{>0}\times\setR^n_{>0}$ ensuring that $A(\xe)\in\graphP(\graphAwdn)$ for all possible eqiulibrium points. Finally, since the structured system $(\graphAwdn,\graphC)$ is observable, then by Definition \ref{Def: strucobs}, the pair $(A(\xe), C)$ is observable for all $A(\xe)\in\graphP(\graphAwdn)$ and $\MatrinStruc{C}$, thus concluding the proof.
\end{proof}

To verify observability of the structured linearized WDN system, we use the colorability approach from \cite{jia2020unifying}, which uses a color-change rule to assess structural controllability, and by duality, observability.

Before introducing colorability, we need to introduce two additional matrices $\graphM=\matM$ and $\graphMbar=\matMbar$, where 
\begin{equation}
    \graphAbar(i,i):= \begin{cases}\ast & \text { if } \graphA(i,i)=0  \\ ? & \text { otherwise.}\end{cases}
\end{equation}
We can associate a directed graph $\graphG(\graphM)=(V,E)$ as follows.
The set of nodes is taken as $V=\{1,\ldots,n_x,\ldots,n_x+n_y\}$. For the edges we include $(j,i)\in E_\ast\in V\times V$ if $M(i,j)=\ast$ and  $(j,i)\in E_?\in V\times V$ if $M(i,j)=?$. To distinguish between these edges visually, we represent the edges in $E_\ast$ with solid arrows and those in $E_?$ with dashed arrows. 
For a graph $\graphG(\graphM)$, one can define the \textbf{color change rule} as follows. 
Every node in $V$ is originally colored white. If node $j$ has exactly one out-neighbor $i$ with $(j,i)\in E_\ast$ that is white (self-loops included), then we change the color of node $i$, to white. If the nodes $V(i)$ for $i=1,\ldots,n_x$, i.e., all the nodes that represent the states of $\graphA^\top$, are colored, then the graph is called \emph{colorable} \cite{jia2020unifying}.

\begin{example}
Consider the following pattern matrices
    \begin{equation}\label{eq: exampleAC}
\graphA^\top\!\!=\!\!\begin{bmatrix}
            0     & \ast & \ast \\
            \ast  & \ast    & 0 \\
            \ast  & ?    & \ast    
        \end{bmatrix}, \;\graphAbar^\top\!\!=\!\!\begin{bmatrix}
            \ast     & \ast & \ast \\
            \ast  & ?    & 0 \\
            \ast  & ?    & ?   
        \end{bmatrix} \text{ and }\graphC^\top\!\!=\!\!\begin{bmatrix}
            \ast \\
          0\\
           \ast   
        \end{bmatrix}\!\!,
    \end{equation}
    which form the combined matrices
    \begin{equation}\label{eq: example M}
        \graphM\!\!=\!\!\begin{bmatrix}
            0     & \ast & \ast & \ast\\
            \ast  & \ast    & 0 & 0\\
            \ast  & ?    & \ast    & \ast\\
        \end{bmatrix} \text{ and } \graphMbar\!\!=\!\!\begin{bmatrix}
            \ast     & \ast & \ast & \ast\\
            \ast  & ?    & 0 & 0\\
            \ast  & ?    & ?    & \ast\\
        \end{bmatrix}\!\!.
    \end{equation}
The associated graphs $\graphG(\graphM)$ and $\graphG(\graphMbar)$ are shown in the first column with 2 plots 
in Fig. \ref{fig:colorability}. In examining the upper left graph in Fig. \ref{fig:colorability} linked to $\graphG(\graphM)$, we see that each node has two out-neighbors that are white, This means that no node can be colored, rendering the graph non-colorable.
Now, let us adjust $\graphC$ as follows: 
\begin{equation}\label{eq: example M new}
         \graphC^\top\!\!=\!\!\begin{bmatrix}
            \ast  &0\\
          0 & 0\\
           \ast &\ast   
        \end{bmatrix}\!\!,\;
       \graphM\!\!=\!\!\begin{bmatrix}
            0     & \ast & \ast & \ast &0\\
            \ast  & \ast    & 0 & 0&0\\
            \ast  & ?    & \ast    & \ast& \ast\\
        \end{bmatrix}\!\!.
    \end{equation}
    As shown in columns 2 and 3 in Fig. \ref{fig:colorability}, with this new $\graphC$ matrix, the graph $\graphG(\graphM)$ is now colorable, the same holds for $\graphG(\graphMbar)$. \hfill$\square$
\end{example}
\begin{figure}[ht]
     \centering
     \includegraphics[width=0.99\linewidth]{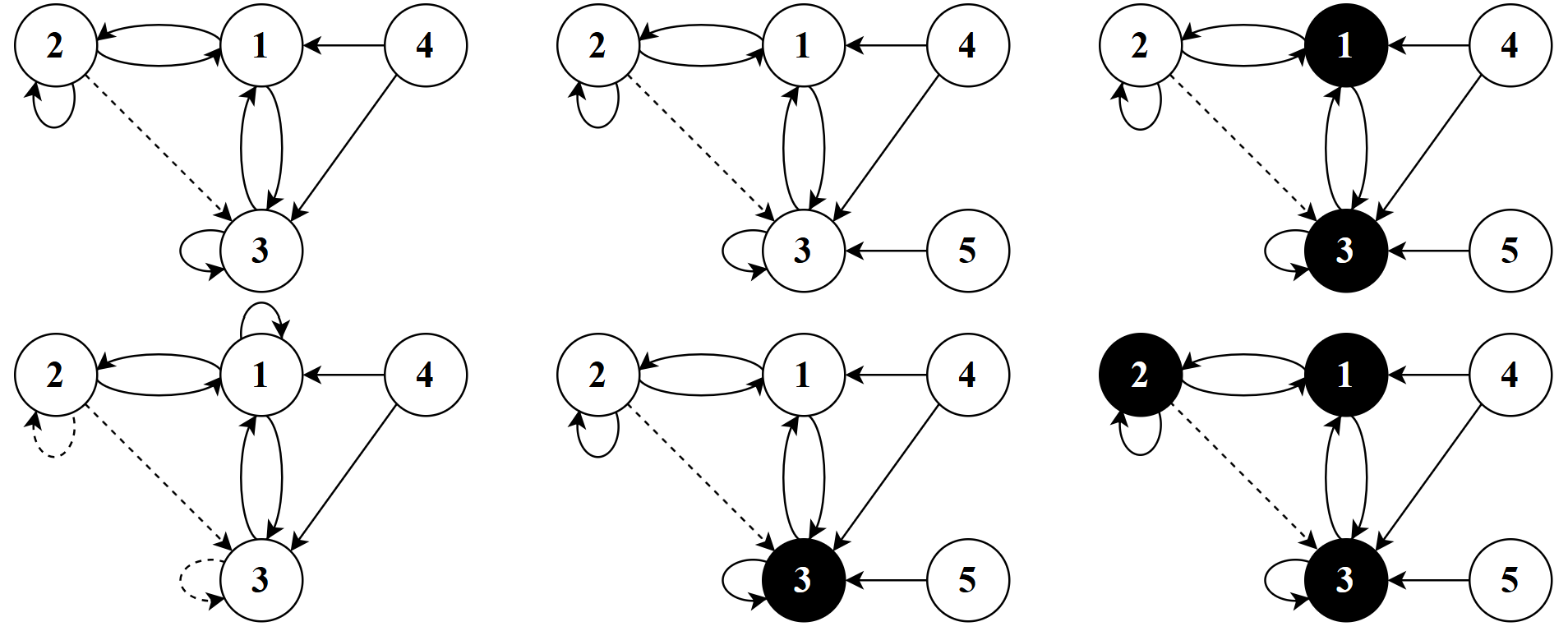}
     \caption{Examples of the color change rule based on pattern matrices $\graphM$ and $\graphMbar$ in \eqref{eq: example M} and $\graphM$ in \eqref{eq: example M new}.}
     \label{fig:colorability}
\end{figure}

Based on this color change rule, we now give the condition for which the structured system is strongly structurally observable.
\begin{definition}[\hspace{-0.005cm}\cite{jia2020unifying}]\label{Def: color-observ}
     A structured system $(\graphA,\graphC)$ is strongly structurally observable if the graphs associated with the pattern matrices $M=\begin{bmatrix}\graphA^\top & \graphC^\top \end{bmatrix}$ and $\graphMbar=\begin{bmatrix}\graphAbar^\top & \graphC^\top \end{bmatrix}$ are colorable. \end{definition}
Based on Definition \ref{Def: color-observ} and Corollary \ref{Cor: global observability}, the WDN model \eqref{eq: nonsys} with \eqref{eq: WDNmodel} is observable if the graphs $\graphM=\begin{bmatrix}\graphAwdn^\top & \graphC^\top \end{bmatrix}$ and $\graphMbar=\begin{bmatrix}\graphAwdnbar^\top & \graphC^\top \end{bmatrix}$ are colorable.  

Next, we define a search algorithm to select the optimal measurement matrix $C$, which results in a colorable combination of $(\graphA^\top,\graphC^\top)$. Note that this implicitly yields a sensor placement solution via the structure of the resulting C matrix. 

\subsection{Sensor placement algorithm}
To set up the sensor placement algorithm, we first introduce a procedure that verifies the colorability of a graph $\graphG(\graphM)$ following the color change rule, detailed in Algorithm \ref{Alg:colorability}.  
The vector \emph{Color} in line 3 is a binary vector where \(\text{Color}\{i\} = 0\) signifies a black node and \(\text{Color}\{i\} = 1\) a white node. Entries in $\graphM$ marked with $?$ are ignored, as they do not affect colorability (line 4).
The algorithm computes each node’s out-degree (line 5) and places nodes with one outgoing edge into a queue (line 6). It then iteratively colors the out-neighbors of these nodes black, removing their incoming edges from $\graphM$ (line 12) and retaining only edges to white nodes. This process continues until the queue is empty. If all nodes of $\graphA$ are colored black, the graph is colorable.

\begin{algorithm}[t]
    \caption{Colorability Check for Observability}
    \begin{algorithmic}[1]\label{Alg:colorability}
        \STATE \textbf{Input:} $\graphM$ or $\graphMbar \in \setR^{n_x \times (n_x+n_y)}$
        \STATE \textbf{Output:} Binary value \textbf{isColorable}
        \STATE $\text{Color} = \zeros \in \setR^{n_x + n_y}$
        \STATE $\graphM(i,j) = 0 \text{ if }\graphM(i,j) = ?$
        \STATE $c_{\text{out}} = \graphM^\top \ones$
        \STATE$ \text{queue} = \{i \mid c_{\text{out}}\{i\} = 1\}$
        \WHILE{$\text{queue} \neq \emptyset$}
            \STATE $\text{Node} = \text{queue}\{1\}, 
 \text{queue}{1} = [\;]$
            \STATE $e_{\text{out}} = \sum _{j=1}^{n_x} \{ \graphM(\text{Node},j) = \ast \}$
            \IF{$\text{length}(n_{\text{out}}) = 1$}
                \STATE $\text{newNode} = \graphM({\text{Node},n_{\text{out}}})$, $\text{Color}(\text{newNode}) = 1$ 
                \STATE $\graphM(:, \text{newNode}) = 0$ for all in\_edge
                \STATE \text{queue} = [\text{queue}; newNode]            
                \ENDIF
        \ENDWHILE
    \end{algorithmic}
\end{algorithm}

Next, we present the sensor placement search algorithm in Algorithm \ref{Alg:sensor_placement}, which aims to minimize the number of sensors, $n_y$, while accounting for industrial costs. The structure of $C$ is defined in Section \ref{Subsec: Problem formulation}, and we use the term "possible sensor location" to refer to the state or node to which a sensor can be assigned. The overall procedure of the Algorithm is illustrated in the corresponding flowchart in Fig. \ref{fig:flowchart}.
The algorithm takes as input the pattern matrices $\graphA$ and $\graphAbar$, the industrial cost vector $c_{\text{ind}}$, the set of possible sensor nodes $nodes$, and the PageRank values for each node. The PageRak values are calculated by inverting the values from the original PageRank equation \eqref{eq: PageRank} to align centrality with cost, i.e., a higher centrality corresponds to greater interest in a sensor location, thus reducing its cost.

 \begin{figure}[ht]
      \centering
      \includegraphics[width=0.99\linewidth]{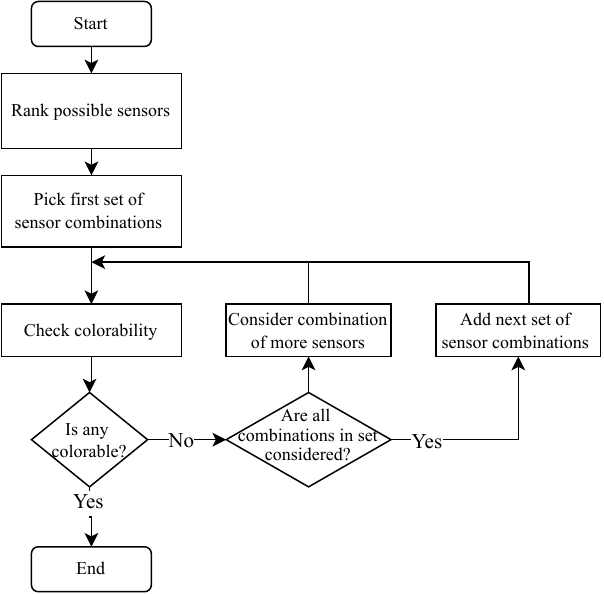}
 \caption{Search algorithm for sensor placement based on colorability.}  
 \label{fig:flowchart}
\end{figure}

The first step in Algorithm~\ref{Alg:sensor_placement} is to compute the in- and out-degree vectors (line 3), $c_{\text{in}}$ and $c_{\text{out}}$, respectively. All cost vectors are assumed to be normalized:
\begin{equation}\label{eq:Norm}
    c_{x_n}=\frac{c_x - \text{min}(c_x)}{\text{max}(c_x) - \text{min}(c_x)},
\end{equation}
such that $c_{x_n}\in[0,1]^n$. 
The normalized cost vectors are weighted by coefficients $\alpha_{\text{PR}}, \alpha_{\text{in}}, \alpha_{\text{out}}, \alpha_{\text{ind}}$, and combined to form the average cost $c_n$ (line 4).
The possible sensor locations are then grouped by average cost from lowest cost to highest cost (line 5).

The algorithm proceeds by iterating over each group, evaluating the possible sensor locations in the $n^{th}$ group alongside those from all preceding groups (i.e., the groups $1$ through $n$, see line 8). All the possible combinations are stored in $\graphK$ (line 10). 
To avoid repetitive evaluations, only combinations that include elements from the current group, $group\{n\}$, are considered.

To slightly enhance computational efficiency, the algorithm iteratively evaluates subsets with possible sensor combinations within $\graphS_n$. 
For each sensor combination, the matrices $\graphC$, $\graphA$ and $\graphAbar$ are defined, and the colorability of $\graphM$ and $\graphMbar$ is evaluated using Algorithm \ref{Alg:colorability} (line 12-14). 
When an observable pair is found, the algorithm terminates and returns $C_{obs}$.
If no observable combination is identified within the current group, the algorithm proceeds to the next group and resumes from line 8. 

\begin{algorithm}[t]
    \caption{Heuristic Search for Minimal Sensor Placement with Colorability Check}
    \begin{algorithmic}[1]\label{Alg:sensor_placement}
    \STATE \textbf{Input:} $\graphA$ and $\graphAbar$ of size $n_x \times n_x$, $c_{\text{PR}},$ and $ c_{\text{ind}}$ of size $n_x$ and $nodes$ of size $n_x$
    \STATE \textbf{Output:} Observable sensor configurations $C_{obs}$
    \STATE $c_{\text{in}}= \graphA \ones$, $c_{\text{out}}= \graphA^\top \ones$
    \STATE $c_n=\alpha_{\text{PR}}c_{\text{PR}}+\alpha_{\text{in}}c_{\text{in}}+\alpha_{\text{out}}c_{\text{out}} +\alpha_{\text{ind}}c_{\text{ind}}$
    \STATE $group = \text{group }nodes \text{ by }c_n$
    \STATE $C_{obs}=[\;]$, $C_{unobs}=[\;]$
    \FOR{$n=1,\ldots,length({group})$}
    \STATE $\graphS_n = \bigcup_{j=1}^{n} {group}\{j\}$
    \FOR{$k=1,\ldots,length(\graphS_n)$}
        \STATE $\graphK\! =\! \{ \text{$k$-combinations in } \graphS_n\! \mid \!(\graphK \!\cap \text{group}\{n\}) \!\geq \!1 \}$
        \FOR{$i=1,\ldots,size(\graphK_n,1)$}
            \STATE \!$\graphC\!=\!\text{diag}(\graphK_n\{:,i\})$, $\graphM \!= \![\graphA^{\!\top}\!,\graphC^{\top}\!]$, $\graphMbar\!=\![\graphAbar^{\top}\!,\graphC^\top\!]$
            \STATE \!Colorability$_1 \leftarrow $ Algorithm \ref{Alg:colorability} with \textbf{input} $\graphM$
            \STATE  \!Colorability$_2 \leftarrow $ Algorithm \ref{Alg:colorability} with \textbf{input} $\graphMbar$ 
            \IF{Colorability$_1 = 1 \;\&\; $Colorability$_2 = 1$}
                \STATE $C_{obs} = [C_{obs};\; \graphC]$
                \ELSE
                \STATE $C_{unobs}= [c_{unobs};\;\graphC]$
            \ENDIF
        \ENDFOR
            \IF{$C_{obs}$ $\neq \emptyset$}
                \STATE $break;$
            \ENDIF
    \ENDFOR
    \ENDFOR
    \end{algorithmic}
\end{algorithm}

It is important to note that this algorithm guarantees an observable solution. In the worst-case scenario, the algorithm will select all possible sensor locations, returning $\graphC$ such that $n_y = n_x$.
Additionally, while it can recover the optimal configuration, the computational complexity grows exponentially with the network size. Given the structure of $C$ in Section \ref{Subsec: Problem formulation}, then with $A \in \setR^{n_x \times n_x}$, there are $2^{n_x} - 1$ possible sensor configurations. As $n_x$ increases, the search space quickly becomes prohibitively large.

\section{Illustrative example}\label{Sec: Examples}
\begin{table}[b]
    \caption{Sensor cost of the WDN network in  \eqref{eq: structuredAwdn}.}
    \centering
    \footnotesize
    \resizebox{0.8\linewidth}{!}{  
    \begin{tabular}{|c|c|c|c|c||>{\columncolor{lightgray}}c||}
    \hline
    \textbf{Sensor location} & $\mathbf{c}_{\textbf{out}}$ & $\mathbf{c}_{\textbf{in}}$ & $\mathbf{c}_{\textbf{PR}}$ & $\mathbf{c}_{\text{ind}}$ & $\mathbf{c}_{\textbf{n}}$ \\
    \hline
     2&         0.75&           0.75&        0.16967&           1&            1\\  
     3&         0.75&           0.75&        0.16967&           1&            1\\  
     5&            1&              1&        0.14991&           0&      0.78423\\  
     4&         0.75&           0.75&        0.14578&      0.4995&      0.78086\\  
     1&         0.75&           0.75&              0&      0.4995&      0.76509\\  
     6&          0.5&            0.5&        0.61626&           0&      0.39658\\  
     7&          0.5&            0.5&        0.61626&           0&      0.39658\\ 
     8&            0&              0&              1&           0&            0\\   
    \hline
    \end{tabular}
    }
    \label{tab: WDNCost}
\end{table}

In this section, we show the results of Algorithm \ref{Alg:sensor_placement} on two examples. 
In the first example, we analyze a star network with an associated pattern matrix:
\begin{equation}\label{eq:Astar}
    \graphA^\top \!\!=\!\! \begin{bmatrix}
     0 & 0 & 0 & \ast & \ast \\
     0 & 0 & 0 & 0 & \ast \\
     0 & 0 & 0 & 0 & \ast \\
     0 & 0 & 0 & 0 & \ast \\
     \ast & \ast & \ast & \ast & 0
    \end{bmatrix}\!.
\end{equation}
The edge between nodes $1$ and $4$, represented by $\graphA^\top(4,1)$ in \eqref{eq:Astar}, introduces a cycle among nodes 1, 5, and 4. This modified star network is chosen to include the presence of cycles typical of most dynamical networks including WDNs. The cost associated with potential sensor locations is listed in Table \ref{tab: StarCost}.
\begin{table}[ht]
    \caption{Sensor costs for the star network in \eqref{eq:Astar}.}
    \centering
    \begin{tabular}{|c|c|c|c|c||>{\columncolor{lightgray}}c||}
    \hline
    \textbf{Sensor location} & $\mathbf{c}_{\textbf{out}}$ & $\mathbf{c}_{\textbf{in}}$ & $\mathbf{c}_{\textbf{PR}}$ & $\mathbf{c}_{\text{ind}}$ & $\mathbf{c}_{\textbf{n}}$ \\
    \hline
    5 & 1 & 1 & 0 & 1 & 1\\
    4 & 0.333 & 0 & 1 & 1 & 0.381 \\ 
    1 & 0 & 0.333 & 0.83503 & 1 &0.000486 \\
    2 & 0 & 0 & 1 & 1 & 0 \\
    3 & 0 & 0 & 1 & 1 & 0 \\
    \hline
    \end{tabular}
    \label{tab: StarCost}
\end{table}
Based on the pattern matrix \eqref{eq:Astar} and the costs from Table \ref{tab: StarCost}, the search algorithm returns the following pattern matrix $\graphC$ as its solution, 
\begin{equation}\label{eq: Cstar}
    \graphC^\top\!=\!\begin{bmatrix}
        \ast& 0& 0\\ 
        0& \ast& 0\\ 
        0& 0& \ast\\
        0& 0& 0\\
        0& 0& 0
    \end{bmatrix}\!, \; C^\top\!=\!\begin{bmatrix}
        1& 0& 0\\ 
        0& 1& 0\\ 
        0& 0& 1\\
        0& 0& 0\\
        0& 0& 0
    \end{bmatrix}\!, 
\end{equation}
resulting in a colorable graph for the structured system $(\graphA, \graphC)$ with \eqref{eq:Astar} and \eqref{eq: Cstar}.
Fig. \ref{fig:star} illustrates the graph $\graphG(\graphM)$ associated with $\graphM=\begin{bmatrix} \graphA^\top & \graphC^\top \end{bmatrix}$. 
\begin{figure}[ht]
     \centering
     \includegraphics[width=0.99\linewidth]{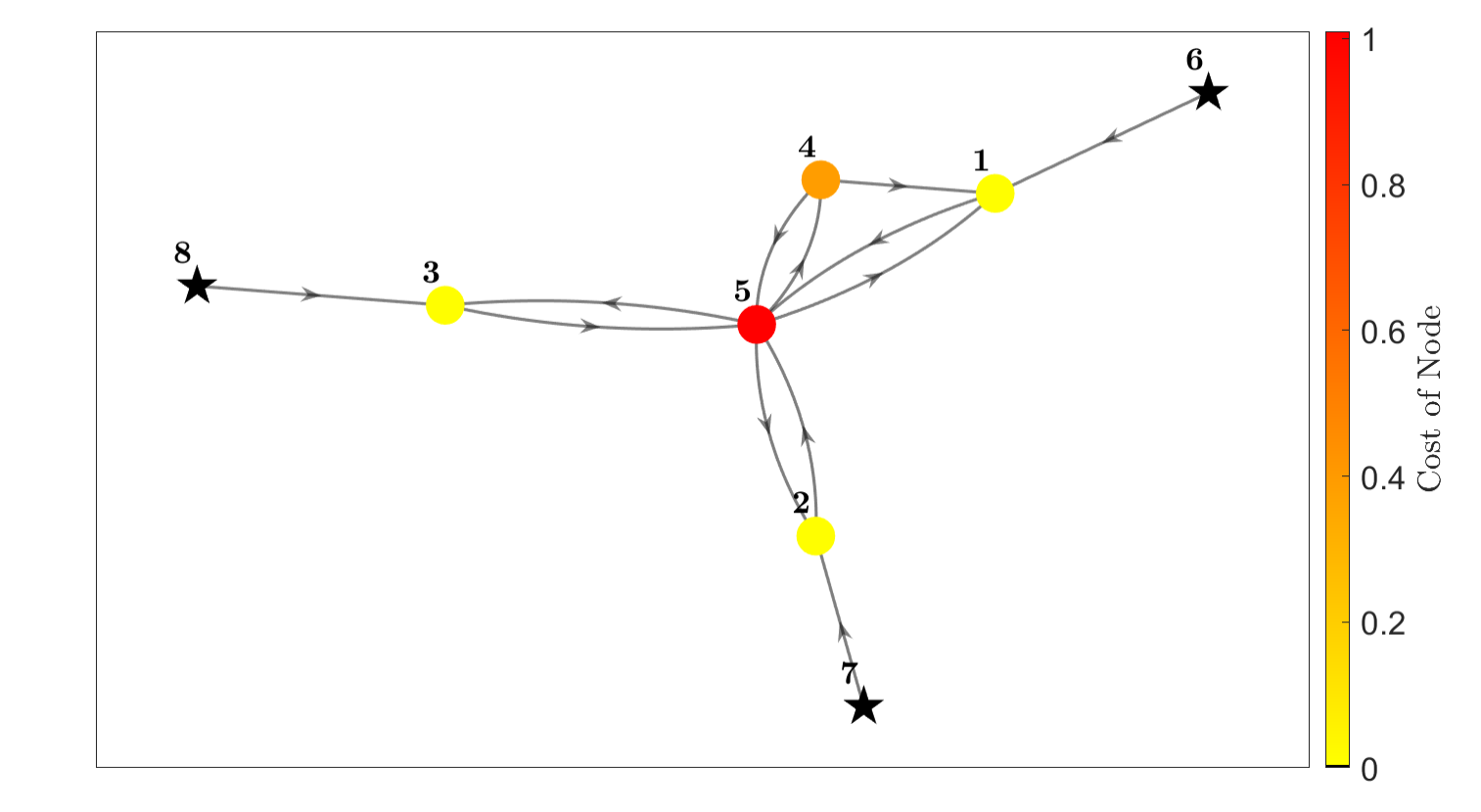}
     \caption{Star network: Colorable graph $\graphG([\graphA^\top,\graphC^\top])$ where a star denotes a sensor, a circle denotes a state and the color denotes the cost of connecting a sensor to that specific state.}
     \label{fig:star}
 \end{figure}
Referring back to the color-change rule outlined in Section \ref{Subsec: Observability of WDN}, we observe that connecting a sensor to nodes $2$, $3$, and $4$ instead of $2$, $3$, and $1$, would also produce a colorable graph. However, due to the significantly lower cost associated with node 1 compared to node 4, the algorithm selects this configuration as the optimal solution.

Next, we show an example using the WDN illustrated in Fig. \ref{fig:WDN network}. The corresponding incidence matrix is given in equation \eqref{eq:IncAdjWdn}, with the pattern matrix $\graphAwdn$ given in equation \eqref{eq: structuredAwdn}. The costs of the possible sensor locations in the WDN are provided in Table \ref{tab: WDNCost}. 

In this network, states $1$ through $4$ represent flow rates in the pipes, while states $5$ through $8$ correspond to the pressure in each junction. Due to the high installation costs of flow sensors, their cost values, represented by $c_{\text{ind}}$ in Table \ref{tab: WDNCost}, are higher. Additionally, we assume that the pipes connected to states 2 and 3 are in remote locations, making their sensor placement undesirable due to the higher installation costs, which further increases the cost values in $c_{\text{ind}}$.

Applying search Algorithm \ref{Alg:sensor_placement} to the pattern matrix $\graphAwdn$ with \eqref{eq: structuredAwdn} with the cost values in Table \ref{tab: WDNCost}, we obtain the following colorable sensor configurations defined in $\graphC_1$ and $\graphC_2$:
\begin{equation}\label{eq: CWdn}
\begin{aligned}
    \graphC_1 &= \begin{bmatrix}
        0 & 0 & 0 & \ast & 0 & 0 & 0 & 0 \\
        0 & 0 & 0 & 0 & 0 & \ast & 0 & 0
    \end{bmatrix}, \\
    \graphC_2 &= \begin{bmatrix}
        0 & 0 & 0 & \ast & 0 & 0 & 0 & 0 \\
        0 & 0 & 0 & 0 & 0 & 0 & 0 & \ast
    \end{bmatrix}
\end{aligned}
\end{equation}

The resulting graphs $\graphM$ in Figure \ref{fig:WDNsens} show that placing a sensor at state 8 (the tank) has a very low cost. However, it would have minimal impact on the colorability of the overall network. Consequently, the search algorithm does not select such a solution. 
Moreover, according to the color-change rule, it is evident that sensor pairs $\{2, 6\}$ or $\{3, 7\}$ would also yield a colorable graph. However, as shown in Table \ref{tab: WDNCost}, these configurations would lead to a higher overall cost, explaining the optimal solution provided by the search algorithm.

 \begin{figure}[t]
\centering
     \includegraphics[width=0.99\linewidth]{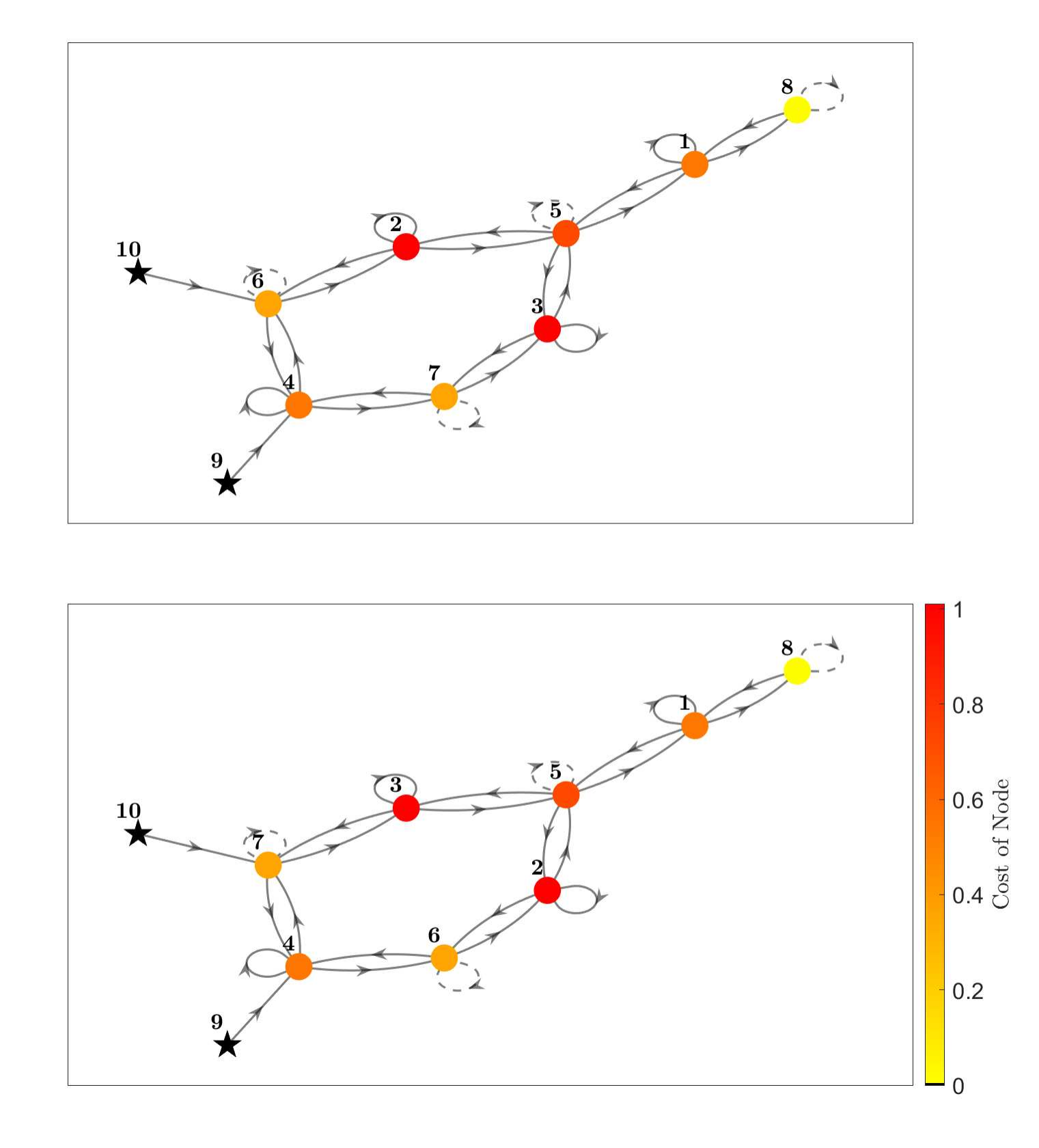}
     \caption{WDN network: Two colorable graphs with different sensor configurations $\graphG([\graphAwdn^\top,\graphC^\top])$, where a star denotes a sensor, a circle denotes a state, and the color denotes the cost of connecting a sensor to that specific state.}
     \label{fig:WDNsens}
 \end{figure}

\section{Conclusions}\label{Sec: Conclusions}
In this paper, we studied the observability of a WDN model using structural observability theory. We established the conditions under which the WDN model is observable, independent of parameter uncertainties, and without the need for initial sensor data. Additionally, we presented a sensor placement algorithm that exploits colorability and graph centrality measures, guaranteeing observability for a nonlinear WDN model with uncertain parameters. The developed search algorithm incorporates a cost-based search that considers the importance of each possible sensor location, the number of connected states, and the industrial costs. The proposed search algorithm was used to construct observable configurations for two examples including a WDN. Through these examples, we demonstrated that the proposed approach allows for cost-effective sensor placement and can be tailored to meet the preferences of the water industry. 

For future work, some directions can be explored to further build on the results of this paper. One potential direction is to extend the proposed sensor placement algorithm to handle larger, more complex WDNs. Additionally, reducing the conservatism of the proposed algorithms, aiming to achieve observability with fewer sensors while maintaining the detectability of leakages.

\paragraph*{Acknowledgements}
The authors are grateful to Dr. Xiaodong Cheng for his useful comments.
\bibliographystyle{IEEEtran}
\bibliography{main}
\end{document}